\newcommand{\junk}[1]{}
\begin{document}
\title{A Note on Solving Problems of
  Substantially Super-linear
  Complexity
  in $N^{o(1)}$ Rounds of the Congested Clique}
%\title{Can Problems of Substantially Super-linear Complexity Be Computed
%in $N^{o(19}$ Rounds of the Congested Clique?}

\author{
  Andrzej Lingas
  \inst{}
}
%\authorrunning{A. Lingas}
\institute{
Department of Computer Science, Lund University, Lund, Sweden. 
\email{Andrzej.Lingas@cs.lth.se}}

\pagestyle{plain}
\maketitle

\begin{abstract}
 We study the possibility of designing $N^{o(1)}$-round protocols for
  problems of substantially super-linear polynomial-time (sequential)
  complexity on the congested clique with about $N^{1/2}$ nodes, where
  $N$ is the input size.  We show that the average time complexity of
  the local computation performed at a clique node (in terms of the
  size of the data received by the node) in such protocols has to be
  substantially larger than the time complexity of the given problem.
\end{abstract}
\begin{keywords}
  congested clique, number of rounds, time complexity
\end{keywords}

\section{Introduction}
In the distributed computational model of CONGEST the focus is on the
cost of communication while that of local computations is not
important. The computation is synchronized in rounds.  Initially,
each processing unit has an access to a distinct part of the input.
In each round, each unit can send a message to each other unit that is
connected by a direct communication link. The bit size of such a message
is at most logarithmic in the input size.
In the {\em unicast} variant of
CONGEST the messages sent by a single unit in a single round can be
distinct while in the {\em broadcast} variant they have to be identical.  In
each round, each unit can also perform unlimited local
computation. The main objective is to minimize the number of rounds
necessary to solve the input problem.

The unicast CONGEST variant, where each pair of units is connected by a
direct communication link, forms the {\em congested clique} model. The
processing units in this model are referred to as nodes of the congested clique.

Since the implicit introduction of the congested clique model by
Lotker \emph{et al.} in 2005 \cite{LP-SPP05}, several graph problems
have been studied in this model under the following assumptions.  Each
node represents a distinct vertex of the input graph and knows which
of the other nodes represent the neighbors of its vertex in the input
graph. If the vertices and/or edges in the input graph are weighted
then each node knows the weight of its vertex and/or the weights of
the edges incident to its vertex. In each round, each node can send a
distinct message to each other node as well as receive a message from
each other node. It can also perform an unlimited local
computation. Several graph problems have been shown to admit protocols
using a low number of rounds on the congested clique.  For instance, a
chain of papers on computing a minimum spanning tree on the congested
clique resulted in an $O(1)$-round protocol for this problem described
in \cite{K_21}.

Even non-graph problems have been studied in the congested clique
model, e.g., matrix multiplication \cite{CK19}, sorting and routing \cite{L}, basic
geometric problems for planar point sets \cite{JLLP24}  etc. In all these cases, one
assumes that the basic input items like matrix entries, keys or points
in the plane have logarithmic in the input size bit representation and
that initially each node holds a bunch of roughly $N^{1/2}$ input items,
where $N$ is the total number of the input
items. Specifically, Censor-Hillel {\em et al.}  have shown that
$n \times n$ matrix multiplication admits an $O(n^{0.157})$-round protocol
\cite{CK19} whereas Lenzen has demonstrated that both sorting and
routing admit even $O(1)$-round protocols \cite{L} in this setting.

In this paper, we are concerned with the question if a problem of
substantially super-linear polynomial-time complexity
\footnote{That is $\Omega (N^{\alpha}$-time complexity, where
$\alpha$ is a constant larger than $1$.} can be solved in
$N^{o(1)}$ rounds, where $N$ is the input size,
in the congested clique model with about $N^{\frac 12}$ nodes. Matrix multiplication
(MM) and all-pairs shortest paths (APSP) are examples of such problems
at present. While we do not exclude such a possibility, we give
an evidence that such a solution (if possible) should be highly
non-trivial. Our straightforward argument is explained by the
following example.

Suppose that we have some computational problem on $n\times n$
matrices and the best possible asymptotic running time of a sequential
algorithm for this problem is $n^{\psi},$ where $\psi > 2.$ This
asymptotic bound translates to $N^{\psi/ 2}$, in terms of the input
size $N=n^2.$ Suppose next that we are solving this problem on the
congested clique with $n$ nodes.  Obviously, the $n$ nodes need to
perform jointly $\Omega (n^{\psi})$ units of work, hence on the
average, a node has to perform $\Omega (n^{\psi-1})$ units of
work. Otherwise, there would exist a faster sequential algorithm for
this matrix problem.
Assume that this problem can be solved in $O(1)$ rounds on the
congested clique. Then, each node can receive only data consisting of
$O(n)$ words, i.e., of size $M=O(n).$ Hence, the average time
complexity of the local computation at a node expressed in terms of
$M$ is $\Omega(M^{\psi-1})$ while the time complexity of the fastest
sequential algorithm for this problem is $O(N^{\psi/2}).$ Observe that
$\psi -1> \psi /2$ holds under the assumption $\psi > 2.$ Consequently, the
asymptotic time complexity of the local computation performed at a
node on average is higher than that of the fastest sequential
algorithm for this matrix problem.  While this does not yield a
contradiction with the model of congested clique allowing unlimited
local computations, it suggests that an $O(1)$-round solution of this
matrix problem on the congested clique (if possible) must be highly
non-trivial.  In particular, it hints that one has to perform some
local computations that are asymptotically substantially more complex
(in terms of the size of received data) than just solving some
sub-problems on sub-matrices (of asymptotic time complexity not
exceeding that of the original problem) following from a
decomposition of the problem.

Our results are obtained by a straightforward generalization and
formalization of the example above including:(i) arbitrary problems of
substantially super-linear polynomial-time complexity, (ii)
$N^{o(1)}$-round protocols, and even(iii) $n^{\delta}$-round protocols
for sufficiently small $\delta$ depending on the time complexity of
the problem.

Thus, we show that an $N^{o(1)}$-round protocol for a problem of
substantially super-linear polynomial-time complexity in the congested
clique
model with $N^{\frac 12}$ nodes must involve local computations of
substantially larger asymptotic time complexity than that of the
problem. More exactly, we show that the exponent of the average time
complexity of a local computation in terms of the size of received
data must be larger than that of the time complexity of the problem.
As in the example, our argument is based on the observation that the
total work performed by a protocol for a given problem on the congested
clique cannot be smaller than the sequential time complexity of the
problem.  For this reason, designing such protocols if at all possible
seems highly non-trivial.

\section{The Argument}
In order to simplify the presentation 
we shall assume that the size $N$ of the input to a
problem is a square of a natural number $n.$

We shall formulate our argument in the relaxed model of congest clique
with $N^{\frac 12 }$ nodes (processing units), where each node is
connected by directed communication links with all other nodes, and
initially holds a distinct part of the input of size $N^{\frac 12}.$

We assume that the {\em work} done by a protocol within given $t$
rounds on the congested clique is the total sequential time taken by
the local computations at the nodes of the clique within the $t$
rounds in the unit-cost Random Access Machine model with computer
words of logarithmic length \cite{AHU}. Specifically, we assume that
posting $m$ messages or receiving $m$ messages in a round requires
$\Theta(m)$ work.

For a problem $P$ solvable in sequential time polynomial in the input
size $N,$ let us define the exponent $opt(P)$ of the (sequential) time
complexity of $P$ as the smallest real number not less than $1$, such
that the problem is solvable in $O(N^{opt(P)+\epsilon})$ sequential
time for any positive $\epsilon .$

For a protocol $B$ that solves the problem $P$ using $t_B(N)$ rounds
and performing total work $w_B(N)$ on the congest clique, let us
define the lower bound
$ave(B)$ on
the exponent of the average local time complexity of
$B$ at a node of the clique in terms of the size of received data as
the smallest real such that
$\frac {w_B(N)}{N^{\frac 12 }}=O((t_B(N)N^{\frac
  12})^{ave(B)+\epsilon})$ for any positive $\epsilon .$ Note that in
the latter definition, we assume the upper bound of
$t_B(N)N^{\frac 12}$ on the maximal worst-case size
$N^{\frac 12} + (t_B(N)-1)(N^{\frac 12}-1)$ of the data received by
the local computation at a node within $t_B(N)$ rounds, where the
initial input data has size only $N^{\frac 12}.$ 
\newpage
\begin{theorem}
  Consider a problem $P$ solvable in polynomial time and a protocol $B$
  that solves an input instance of 
  $P$ of size $N$ using $t_B(N)$ rounds on the congested clique.
  If $ave(B)$ is well defined
  then the following inequality holds:
 % If  $2\le t_B(N)$ then
  $$ave(B)\ge \frac {opt(P)-\frac 12
    -o(1)}{\frac 12 +\log_N t_B(N)}.$$
  %otherwise $$ave(B)\ge \frac {opt(P)-\frac 12  -o(1)}{\frac 12}$$
  \end{theorem}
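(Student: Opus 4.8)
The plan is to combine two facts and eliminate the total work $w_B(N)$ between them: a sequential-simulation lower bound saying that $w_B(N)$ cannot fall below the sequential complexity of $P$, and the defining inequality of $ave(B)$, which upper-bounds $w_B(N)$ in terms of $t_B(N)N^{1/2}$ and $ave(B)$.

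First I would make the paper's central observation precise: the total work cannot be smaller than the sequential complexity of $P$. Given $B$, a single unit-cost RAM simulates it round by round — for each round it reads the messages delivered to each node, runs that node's local computation, writes the outgoing messages into buffers, and then routes those buffers to their recipients. Every step is charged to the work of $B$: local computation by definition, and reading/posting/routing of $m$ messages as $\Theta(m)$. Hence the simulation runs in time $O(w_B(N))$ (the $O(N)$ cost of ingesting the input is dominated, since $opt(P)\ge 1$). Because this simulation solves $P$, the exponent of $w_B(N)$ cannot drop below $opt(P)$: writing $\beta(N)=\log_N w_B(N)$, a bound $w_B(N)=O(N^{opt(P)-\epsilon_0})$ for a fixed $\epsilon_0>0$ would give a sequential algorithm beating the minimal exponent $opt(P)$, so minimality forces $\beta(N)\ge opt(P)-o(1)$.

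Next I would unwind the definition of $ave(B)$. For every $\epsilon>0$ there is a constant with $w_B(N)/N^{1/2}=O\bigl((t_B(N)N^{1/2})^{ave(B)+\epsilon}\bigr)$; taking $\log_N$ of both sides, writing $\tau(N)=\log_N t_B(N)$, and absorbing the logarithm of the constant into an $o(1)$ term gives
\[
\beta(N)\ \le\ \tfrac12+(ave(B)+\epsilon)\bigl(\tau(N)+\tfrac12\bigr)+o(1).
\]
Combining this with $\beta(N)\ge opt(P)-o(1)$ from the previous step yields $(ave(B)+\epsilon)\bigl(\tau(N)+\tfrac12\bigr)\ge opt(P)-\tfrac12-o(1)$, and since $\tau(N)+\tfrac12>0$ we may divide to obtain $ave(B)+\epsilon\ge\bigl(opt(P)-\tfrac12-o(1)\bigr)/\bigl(\tfrac12+\tau(N)\bigr)$. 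Letting $\epsilon\to0$, which is legitimate because the inequality holds for all positive $\epsilon$, delivers exactly the claimed bound.

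I expect the only real obstacle to be bookkeeping rather than ideas: justifying that the sequential simulation costs no more than $O(w_B(N))$ — which is precisely what the definition of \emph{work} (charging $\Theta(m)$ per $m$ messages) was set up to guarantee — and tracking the $\epsilon$ and $o(1)$ terms cleanly through the logarithms. One point of care is that $\tau(N)$ varies with $N$, so the inequality is read asymptotically; the "cannot beat $opt(P)$" step should accordingly be phrased to exclude a polynomial ($N^{\epsilon_0}$) improvement over $opt(P)$, which is enough to keep the correction at the stated $o(1)$ level in the numerator.
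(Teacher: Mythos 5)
Your proposal is correct and takes essentially the same route as the paper: a round-by-round sequential simulation showing that $P$ is solvable in $O(w_B(N))$ time, hence $w_B(N)\ge N^{opt(P)-o(1)}$ by minimality of $opt(P)$, followed by combining this with the defining inequality of $ave(B)$ and taking $\log_N$. Your bookkeeping of the $\epsilon$ and $o(1)$ terms (via $\beta(N)=\log_N w_B(N)$ and $\tau(N)=\log_N t_B(N)$) is in fact slightly more explicit than the paper's, which compresses these steps into $w_B(N)=\Omega(N^{opt(P)})$ and $(t_B(N)N^{\frac 12})^{ave(B)}\ge N^{opt(P)-\frac 12-o(1)}$.
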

\begin{proof}
  Let $w_B(N)$ be the total work performed by the protocol
  $B$ within the $t_B(N)$ rounds.
  By the definition of $w_B(N),$ the problem $P$ can be solved
  in $O(w_B(N))$ sequential time.
  Simply, we can  simulate the protocol sequentially round after
  round. In each round, we can perform the local computations
  of the consecutive nodes, using $O(1)$ registers for each
  direct link to fetch the messages sent to the current node
  in the previous rounds and to post new messages, respectively.
  Hence, we obtain $w_B(N)=
  \Omega(N^{opt(P)})$ by the definitions of $opt(P)$
  and the fact that $P$ can be solved in $O(w_B(N))$ time.
  %If $2\le t_B(N)$
  Then, by the definition of $ave(B)$,
  we infer that
  $(t_B(N)N^{\frac 12})^{ave(B)}\ge N^{opt(P)-\frac 12
    -o(1)}.$
Consequently, we obtain
$$N^{(\log_N t_B(N) +\frac 12)ave(B)}
\ge  N^{opt(P)-\frac 12 -o(1)}$$
which yields
$$ave(B)\ge \frac {opt(P)-\frac 12-o(1)}{\frac 12 +\log_N t_B(N)}.$$
%Otherwise, in a similar fashion, we obtain
% $$ave(B)\ge \frac {opt(P)-\frac 12 -o(1)}{\frac 12}.$$
\qed
\end{proof}

\begin{corollary}
If $t_B(N)\le
N^{o(1)}$ then 
$ave(B)\ge 2opt(P)-1-o(1).$
  In particular, if $opt(P)>1$  then
 $ave(B)> opt(P).$ 
\end{corollary}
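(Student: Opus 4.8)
The plan is to obtain the Corollary as a direct specialization of the Theorem to the sub-polynomial round regime $t_B(N)\le N^{o(1)}$, so no machinery beyond the Theorem's inequality is required. First I would translate the round-count hypothesis into a statement about the denominator in the Theorem's bound. Writing $\log_N t_B(N)=\frac{\log t_B(N)}{\log N}$, the assumption $t_B(N)\le N^{o(1)}$ is exactly the statement that this quotient tends to $0$, i.e. $\log_N t_B(N)=o(1)$. Substituting $\frac{1}{2}+\log_N t_B(N)=\frac{1}{2}+o(1)$ into the Theorem then gives
$$ave(B)\ge \frac{opt(P)-\frac{1}{2}-o(1)}{\frac{1}{2}+o(1)}.$$

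Next I would simplify this quotient. Since $opt(P)$ is a fixed constant, the leading part $opt(P)-\frac{1}{2}$ divided by $\frac{1}{2}$ equals $2\,opt(P)-1$; expanding $(\frac{1}{2}+o(1))^{-1}=2\,(1-o(1))$ and multiplying out shows that the vanishing perturbations in the numerator and denominator jointly contribute only an additional $o(1)$ error, since they are divided by a quantity bounded away from $0$. This establishes the first claim, $ave(B)\ge 2\,opt(P)-1-o(1)$.

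For the ``in particular'' assertion I would compare $2\,opt(P)-1$ with $opt(P)$: their difference is precisely $opt(P)-1$, a strictly positive constant whenever $opt(P)>1$. Because this gap is a fixed positive number independent of $N$, it is not absorbed by the subtracted $o(1)$ term once $N$ is large enough, and the strict inequality $ave(B)>opt(P)$ follows.

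The argument is elementary and involves no genuine obstacle; the only point demanding care is the bookkeeping of the $o(1)$ terms through the division, in particular verifying that dividing an $o(1)$ perturbation by the denominator $\frac{1}{2}+o(1)$ (bounded away from zero) leaves it $o(1)$, and that the strict conclusion survives the subtraction of $o(1)$ solely because $opt(P)-1$ is a \emph{constant} positive gap rather than a vanishing one.
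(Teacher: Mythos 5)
Your proof is correct and matches the paper's (implicit) argument: the paper states this corollary without proof as an immediate specialization of Theorem~1, obtained exactly as you do by noting $\log_N t_B(N)=o(1)$, simplifying the quotient to $2\,opt(P)-1-o(1)$, and using the constant positive gap $opt(P)-1$ for the strict inequality. Your careful bookkeeping of the $o(1)$ terms through the division is precisely the routine verification the paper leaves to the reader.
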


By substituting $N^{\delta}$ for
$t_B(N)$ in Theorem 1 and by straightforward calculations, we obtain also
the following corollary from Theorem 1.

\begin{corollary}
  If  $t_B(n)\le N^{\delta}$
  and $\delta <\frac{ \frac 12 (opt(P)-1)-o(1)}{opt(P)}$
  then $ave(B)> opt(P).$
\end{corollary}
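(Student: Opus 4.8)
The plan is to obtain the corollary as a direct specialization of Theorem~1, substituting the round bound and then solving the resulting inequality for $\delta$. First I would note that the hypothesis $t_B(N)\le N^{\delta}$ is equivalent to $\log_N t_B(N)\le \delta$. Since $opt(P)\ge 1$, the numerator $opt(P)-\frac 12 -o(1)$ appearing in the bound of Theorem~1 is positive, and the bound is monotonically decreasing in $\log_N t_B(N)$; hence enlarging the denominator by replacing $\log_N t_B(N)$ with $\delta$ only weakens it, so the lower bound remains valid. This yields
$$ave(B)\ge \frac{opt(P)-\frac 12 -o(1)}{\frac 12 +\log_N t_B(N)}\ge \frac{opt(P)-\frac 12 -o(1)}{\frac 12 +\delta}.$$

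Next I would reduce the claim $ave(B)>opt(P)$ to showing that this last right-hand side already exceeds $opt(P)$. Because the denominator $\frac 12 +\delta$ is positive (indeed $\delta>0$ is small), I can clear it without reversing the inequality, so it suffices to establish
$$opt(P)-\frac 12 -o(1)>opt(P)\left(\frac 12 +\delta\right).$$
Expanding the product on the right, collecting the $opt(P)$ terms on the left, and then isolating $\delta\cdot opt(P)$ gives the equivalent condition $\frac 12 (opt(P)-1)-o(1)>\delta\cdot opt(P)$. Dividing through by $opt(P)>0$ produces exactly the upper bound on $\delta$ assumed in the corollary, which closes the argument.

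The computation is entirely routine, so I do not expect a genuine obstacle. The only points requiring mild care are the direction-of-inequality bookkeeping when substituting $\delta$ for $\log_N t_B(N)$, which rests on the monotonicity observation above and on the positivity of the numerator, and the treatment of the $o(1)$ term, which I would carry symbolically through the algebra and interpret as additive slack that the strict inequality defining the admissible range of $\delta$ is designed to absorb.
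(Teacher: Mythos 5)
Your proposal is correct and follows essentially the same route as the paper: the paper's proof also just substitutes $\delta$ for $\log_N t_B(N)$ in the bound of Theorem~1 and solves the resulting inequality for $\delta$ by straightforward calculation, leaving the algebra implicit where you spell it out. Your added care about monotonicity in $\log_N t_B(N)$ and the positivity of the numerator (valid since $opt(P)\ge 1$) merely makes explicit what the paper's one-line proof takes for granted.
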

\begin{proof}
  By Theorem 1, it is sufficient to
  solve the inequality stating that the right-hand side of
  the inequality in Theorem 1, with $\delta$ substituted for
  $\log_Nt_B(N)$, is greater than $opt(P)$. The solution follows
  by straightforward calculations with respect to $\delta.$
  \qed
\end{proof}

Our first example
is concerned with
matrix multiplication, say $P=MM.$ We have
$opt(MM)=\omega/2$, where $\omega$ is the exponent of the fast matrix
multiplication \cite{VXZ24}. Corollary 1 implies $ave(B)\ge \omega -1
-o(1)>opt(MM)$ when $t_B(N)=N^{o(1)}$ and $\omega >2.$ The present upper
bound on $\omega $ is $2.371552$ \cite{VXZ24}. This yields $opt(MM)<
1.186.$ Hence, the threshold value of $\delta$ in Corollary 2 has to be
below $\frac {0.093}{1.186}\le 0.0784$ in case of matrix
multiplication. The fastest protocol for $n\times n$ matrix
multiplication on the congested clique with $n$ nodes
uses $O(n^{0.157})$ rounds
\cite{CK19}, i.e., $O(N^{0.0785})$ rounds. Thus, the exponent of
the round complexity of the protocol from \cite{CK19} is very close to the
threshold in Corollary 2 guaranteeing the exponent of
of the polynomial bounding the average time of local
computation (in terms of
the size of received data) to be larger than that $\omega/2$ of matrix
multiplication.
  
Generally, the larger is $opt(P)$ the larger is the gap between
$ave(B)$ and $opt(P),$ e.g., in our second example concerned with the
all-pairs shortest path problem (APSP) in a graph on $n$ vertices.
%that admits a $\tilde{O}(N^{1/6})$-round protocol on the congested clique \cite{CK19}.

Assume $P=APSP$.
No truly subcubic-time\footnote{That is $O(n^{\beta})$-time,
  where $\beta $ is a constant smaller than $3.$}
  sequential algorithm for APSP is known \cite{VW08}.
So the
best known upper bound on $opt(APSP)$ is $1.5.$ For example,
Corollary 1 implies
$ave(B) > 3-1 -o(1)\ge opt(APSP)+0.4$, when $t_B(N)=N^{o(1)}$ and
$opt(APSP) =1.5.$ The threshold value of $\delta$ in Corollary 2 has
to be below $\frac {0.25}{1.5}\le 1/6$ for APSP.  The fastest known protocol
for APSP in an $n$-vertex graph on the congested clique uses
$O(n^{1/3})$ rounds \cite{CK19}, i.e., $O(N^{1/6})$ rounds. Thus,
the exponent of the round complexity of the protocol for APSP
from \cite{CK19} is also slightly over the threshold in Corollary 2.

We may conclude that the fastest known protocols $B$ for MM and APSP
seem to be already quite advanced since their round complexity is
close to thresholds below which $ave(B)>opt(P)$, where
$P\in \{ MM,\ APSP\}.$

\section{Discussion}

Theorem 1 can be interpreted as a trade-off between the number of
rounds and the average time complexity of the local computations (in
terms of the size of the received data) of a protocol solving a problem on the
congested clique with about $N^{1/2}$ nodes.  The smaller is the
number of rounds required by the protocol the larger must be the
average time complexity of the involved local computations.  By
Corollary 1, if the problem has a substantially super-linear
polynomial-time complexity and the protocol uses only $N^{o(1)}$
rounds then the exponent of the polynomial bounding the average time
complexity of local computations (in terms of the size of received
data) has to be larger than that of the polynomial bounding the time
complexity of the input problem (even if all the involved local
computations are necessary and they use optimal algorithms). Such a
scenario is not excluded by the congested clique model as the model
does not restrict the local computational power. Nevertheless, our
results suggest that designing $N^{o(1)}$-round protocols for the
aforementioned problems might be highly non-trivial if at all
possible.  A compression and reuse of data exchanged by the nodes,
and/or performing some identical/similar computations by several
nodes, all in order to save on the communication might be natural but
not necessarily sufficient reasons for the higher asymptotic
complexity of the local computations.

Theorem 1 can be easily extended to include a generalized CONGEST
model with $N^{\alpha}$ nodes (processing units), $0\le \alpha \le 1,$
where each node is connected by directed communication links with
$N^{\beta}$ other nodes, $0\le \beta \le \alpha,$ and initially
holds a distinct part of the input of size $N^{1-\alpha}.$
%We omit this extension since the most interesting consequences
%of the extended theorem hold anyway for $\alpha$ and $\beta$ close to
%$\frac 12$ (cf. Corollaries 1 and 2).

\section*{Acknowledgments}
Thanks go to all who helped to improve the presentation of the note.
%\small
 \bibliographystyle{abbrv}                   
\bibliography{Voronoi2}

\vfill

\end{document}